 \newtheorem{lem}{Lemma}
\begin{document}
%
\title{Optical Onion Routing}
\author{\IEEEauthorblockN{Anna Engelmann and Admela Jukan}
\IEEEauthorblockA{Technische Universit\"at Carolo-Wilhelmina zu
Braunschweig, Germany\\
Email: \{a.engelmann, a.jukan\}@tu-bs.de}
}
\maketitle

\begin{abstract}
As more and more data is transmitted in the configurable optical layer, -- whereby all optical switches forward packets without electronic layers involved, we envision privacy as the intrinsic property of future optical networks. In this paper, we propose Optical Onion Routing (OOR) routing and forwarding techniques, inspired by the onion routing in the Internet layer, - the best known realization of anonymous communication today, -- but designed with specific features innate to optical networks. We propose to design the optical anonymization network system with a new optical anonymization node architecture, including the optical components and their electronic counterparts to realize layered encryption. We propose modification to the  secret key generation using Linear Feedback Shift Register (LFSR), -- able to utilize different primitive irreducible polynomials, and the usage optical XOR operation as encryption, an important optical technology coming of age. We prove formally that, for the proposed encryption techniques and distribution of secret information, the optical onion network is perfectly private and secure. The paper aims at providing practical foundations for privacy-enhancing optical network technologies. 
\end{abstract}


%
\IEEEpeerreviewmaketitle

\section{Introduction}
\par Communication privacy is important. In IP networks, based on the source and destination IP addresses, an adversary can track interactions and interaction patterns, revealing personal data about the users. Therefore, practical mechanisms have been developed to enhance user privacy via unlinkability and unobservability, in the so-called anonymity networks, or mix nets. One of the most popular anonymity networks today is The Onion Routing (Tor), built as an overlay network among volunteer systems on the Internet. Tor provides anonymous communication between source and destination as well as data integrity. Onion routing is a low-latency application of mix nets, where each message is encrypted to each proxy using public key cryptography, with the resulting layered encryption. Each relay has a public and a private key. The public keys are known by all users and are used to establish communication path. Anonymous communication is possible through traffic tunneling over a chain of randomly selected Tor relays. After the tunnel between a pair of Tor routers is setup, symmetric key cryptography is used to transfer the data. These encryption layers ensure sender unlinkability, whereby the eavesdropper is unable to guess complete path from observed links \cite{Erdin:2015,Nepal:2015}.

\par As more and more data is transmitted in the configurable photonic layer, whereby all optical switches and routers forward packets without electronic layers involved, we envision privacy as the intrinsic property of optical networks also. Just like optical and quantum cryptography has advanced the field of traditional cryptography \cite{Mowla:2016,ChenZeng:2015}, optical network systems designed with secrecy and anonymity features should also be able to provide essential building blocks for privacy in future networks, built to serve free societies. However, in contrast to Tor networks, where privacy and anonymity directly depend on number and dependability of volunteer systems, the privacy features in optical network need to be approached differently: for instance, it is a telecom operator that should be able offer a private optical communication service as a value added feature. For instance, for some client networks an optical network can grant anonymous access to third-party servers in the cloud, whereby the traffic contents and the origin of requests can remain secret for both the attacker as well as the cloud provider. In designing an anonymous optical network akin to Tor, however, several obstacles need to be overcome, since the main features need to be primarily implemented in photonics, i.e., without intervention of electronics, such as encryption, traffic routing, and session key distribution. Also, just as Tor requires compute intensive processing of encryption layers in forwarding routers, high speed processing of optical data would also be required, or consideration of large optical buffers, which is a challenge, and requires practical foundations for privacy-enhancing optical network technologies. 
    \begin{figure*} [t]
  \centerline{\includegraphics[width=0.75\textwidth]{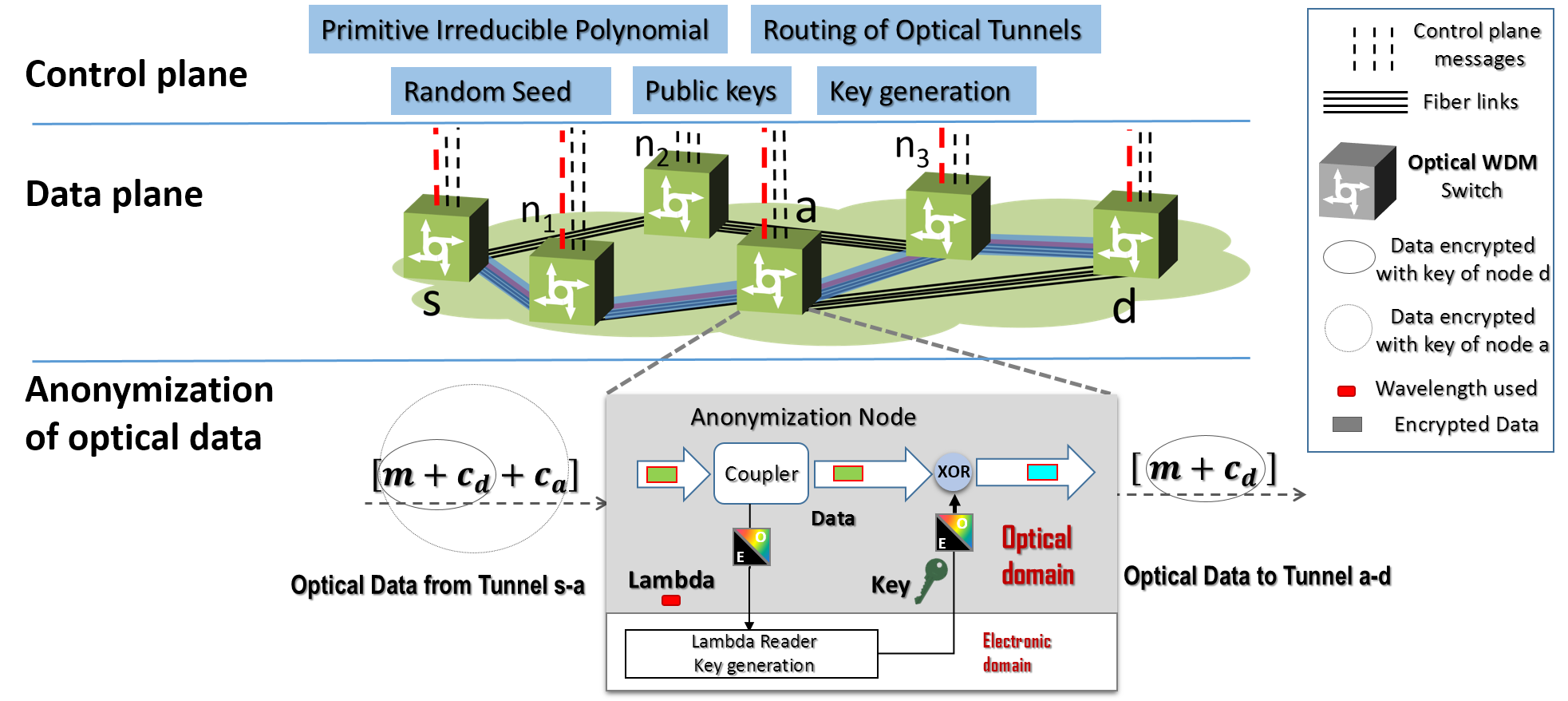}}
  \vspace{-0.2cm}
  \caption{\small Anonymous forwarding through data transformation.}
  \label{transform}\vspace{-0.6cm}
  \end{figure*}

\par In this paper, we propose to treat the well-known privacy constructs of Tor in the optical layers, which we refer to as the Optical Onion Routing (OOR).  To this end, we address two practical issues: the design of all-optical anonymization nodes, and the secrecy and privacy degree achieved. To design an optical anonymizaiton node, we propose to generate a session key with Linear Feedback Shift Register (LFSR), -- a component more commonly used for random number generation, able to utilize different primitive irreducible polynomials of random degree. In addition, we propose to use an optical XOR operation as encryption, an important all-optical technology coming of age. These two components allow to encrypt data in the optical layer at the line speed, thus eliminating the need for large buffers in the node.  To enable optical routing and forwarding, we integrate the anonymization functions in the traditional optical cross connect architecture, with the goal of processing optical data all-optically as much as possible with the current technologies. Finally, we prove formally that for the encryption technique and distribution of secret information proposed, the system can be perfectly private and secrecy-preserving, whereby entropy of the secret data is equal to or less than equivocation observed by the fiber eavesdropper.

\par The rest of the paper is organized as follows. Section II provides principles of the optical anonymity routing proposed. Section III presents the analysis. Section IV shows analytical and simulation results. Section V concludes the paper.

\section{System Model}
\subsection{Anonymous forwarding}
\par Onion routing in the Internet is based on a connection-oriented communication channel, \emph{a circuit}. This is where we start drawing the analogy. We envision optical WDM network as the underlaying infrastructure to setup that circuit, and assume a network of optical nodes and fiber links, where switching, routing and forwarding is all done in the photonic domain. Just like in the Tor, the nodes can act as either regular optical nodes, - with all-optical switching and forwarding functions, or the anonymization nodes. Anonymization nodes are the optical nodes with enhanced functions responsible for processing and forwarding optical signals, such that no correlation can be established between the source and destination by tapping into any link along the way.  As the optical network architecture usually encompass both the data plane, and from the data plane a separated control plane, we assume that the control plane is able to provide information about network topology, available network resources and is able to direct optical data and control the related processing such as encryption. Similar to Tor, only a subset of anonymization nodes in the network is enough to assure secrecy and anonymity. Control plane randomly selects anonymization nodes in the network and available wavelengths and, then, sends control message  to establish optical circuit between source $s$ and destination $d$ on the select wavelengths. Here, control plane does not distribute the actual session keys, but only the routing information for optical circuit setup as well as randomly selected parameters for session key generation. To keep these sensitive control information private and confidential, the control plane encrypts it in layers by applying the public key cryptography, just like in Tor. Fig.~\ref{transform} illustrates the idea of OOR network architecture.

\par The source $s$ is an initiator of private communication, whereby, based on control plane information, anonymization nodes and the corresponding available wavelengths are randomly selected, and made known to the source. After that, the control plane sends, on a select wavelength or separate control channel, the control messages to establish optical circuit between source $s$ and destination $d$, as well as to distribute policies of session key generation to all nodes in that circuit. This is similar to Tor network, where the tunnel is established over randomly selected IP routers. In our example, the path between source and destination consists of two concatenated circuits, one between nodes $s$ and $a$ and the other one between the nodes $a$ and $d$, whereby each circuit contains one forwarding node; the forwarding node is a traditional optical switching and forwarding function, without anonymization. The circuit, i.e., end-to-end wavelength path, is established over arbitrary available links and forwarding nodes on the available wavelengths. Thus, the path available between $s$ and $d$ is randomly selected for setup, so that neither the destination $d$ nor anonymization nodes know the paths selected (which is the essence of Tor). The control message is encrypted with public keys of nodes $a$ and $d$, as in Tor. In contrast to Tor, where data from exit node, i.e., the last anonymization node, to destination is sent without encryption, all nodes involved in anonymous communication in OOR, i.e, $s$, $a$ and $d$, perform anonymization of optical data via encryption.

\par The idea behind onion routing, and its Tor implementation, is to hide the communicating nodes from the eavesdropper of the individual links, as well as the identity of the source from the destination. This is how we envision to do it in the optical layer.  After the optical path (tunnel) is established (via two circuits) the secret data $m$ is ready for transmission towards the anonymization node $a$.  The secret data $m$ is encrypted at the source with a session keys $c_a$ and $c_d$ of $a$ and $d$, respectively. These session keys are generated with the previously mentioned Linear Feedback Shift Register (LFSR), which is its new application as it is a component more commonly used for random number generation. Here, LSFR generates the key based on randomly selected generator polynomials and seeds configured by the control plane. Thus, the source sends an optical stream $[m+c_d+c_{a}]$ to node $a$. \emph{Lambda reader} in node $a$ detects the input port and wavelength allocated, and based on that allocated wavelength the \emph{Key Generation Unit} allocates a suitable session key $c_a$ which is then sent to optical Decryption Unit. Finally, the  payload is decrypted with a key $c_a$ as $m+c_{d}+c_{a}+c_{a}=m+c_{d}$. Next, the optical data stream $[ m+c_d]$ sent by $a$ over sub-tunnel reaches node $d$.  After detecting of input signal on certain wavelength, at $d$, the session key $c_d$ is applied to optical payload as $m+c_d+c_d=m$. Due to data encryption in anonymization nodes, each outgoing optical stream differs from incoming optical stream. When an attacker has access to links of certain switch, it must deanonymize all outgoing data to identify a certain optical stream of interest and to guess its next hop.

\subsection{Discussion on implementation}

\subsubsection{Public Key Cryptography}
To distribute confidential control plane information during optical circuit (tunnel) setup process, the public key cryptography can be applied, similar to Tor. The public key cryptosystems require two keys, i.e., public $K^+$ and private $K^-$. The public keys of all nodes in the network are known. Due to the fact, that the key information must be stored, the public key cryptosystem is implemented in the electronic control plane layer, similar to what is proposed in \cite{Guneysu:2014}. In our architecture, we do not define a specific public key cryptosystem and generally allow all existing public key designs, which can be based on discrete logarithm problem such as Diffie-Hellman, factorization problem such as Rivest-Shamir-Adelmann (RSA) or on square root problem, such as Rabin systems.

\begin{figure} [t!]
\hspace{1mm}
  \centerline{\includegraphics[width=0.38\textwidth]{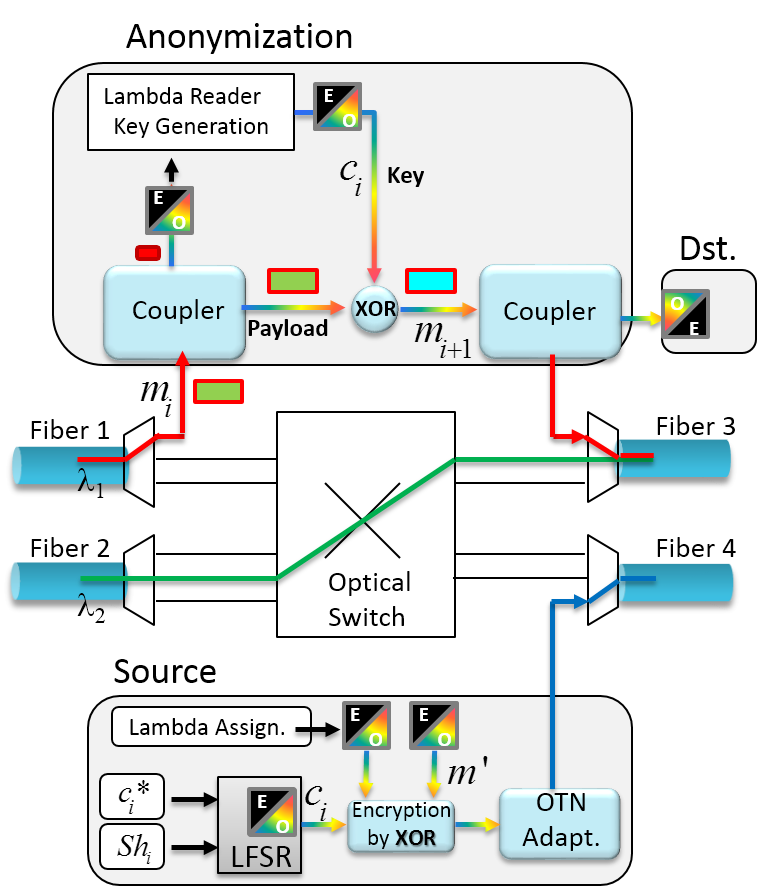}}
  \vspace{-0.4cm}
  \caption{\small Architecture of optical node in OOR network.}
  \label{node}\vspace{-0.6cm}
  \end{figure}

\subsubsection{Exclusive or (XOR) operation}
The XOR operation  utilized in cryptographic systems is usually implemented in software. We propose to implement encryption and decryption with session keys in the optical layer, i.e., data anonymization, with an all-optical XOR gate component \cite{Dimitriadou:2013,Yang:2010}. The XOR operation transforms the incoming data into new outgoing data and, thus, unlinks the communication between source and destination, whereby  each incoming message is mixed, i.e., XOR concatenated, with a session key. Here, ultrafast nonlinear interferometers based on semiconductor optical amplifiers (SOAs) can be used to combine two optical streams, whereby transverse electric (TE) and transverse magnetic (TM) components of a probe pulse can be split and recombine by setting the relative optical delays between them. When the phases experienced by the TE and TM components in the SOA are the same, the resulting signal is '1', or '0' otherwise.

\subsubsection{Linear Feedback Shift Register}
  In each anonymization node, optical data is ananymized by encryption, before it is forwarded to output port. We propose to generate the session key for ananymization by LFSR, a component commonly used as random number generator. The session key generation with LFSR is discussed in \cite{67207, Eljadi:2014}.  Since LFSR of length $n$ bits can generally be easily deducted by observing $2n$ consecutive generated bits, we propose to utilize different generator polynomials of different degrees and randomly selected seeds. This can help us to increase the amount and randomness of possible session keys, with the goal to provide a one-time pad, which is random, and at least as long as the plaintext, and not reused and completely secret \cite{Shannon:1949}. Generally, LFSR can be implemented in hardware or software. In our system, this function is implemented in the optical layer, which necessitates an electrical-to-optical conversion before encryption (XOR). The session key can be pre-calculated during circuit setup process, or generated at line rate. The first variant is suitable for LFSR implementations as we propose, based on \cite{David:2012},  whereby it must be assumed that additional electronic buffer is required to store the pre-calculated keys. In contrast, the second solution must be implemented at line speed, what is a challenge for current optical systems, due to high speed, though it would eliminate the need for buffer. 

\subsection{OOR node architecture}

\par A possible node architecture is illustrated in Fig.~\ref{node}. As it can be seen, the typical WDM node architecture is enhanced to provide functions of anonymization. In that sense, the node can act as a simple forwarding (all-optical switching node), anonymization node, or OOR node for sending (source) or termination (destination). We next describe each of these functionalities and concepts in more detail.

\subsubsection{Source }
The basic function of the source node is to modulate the electronic signals onto optical carriers, along with the flow encryption, the wavelength assignment and/or any other flow adaptation for further transmission over an OTN/WDM network. Here, the optical data generated ($m'$) is encrypted with the dedicated keys from LFSR and by applying optical XOR. The source collects all anonymization keys of the anonymization nodes that are to be used on the wavelength path assigned, e.g., $\vv c=\{c_1, c_2,...,c_d \}$ and these keys are to be utilized by each anonymization node traversed, e.g., $c_1$ by the node $a_{1}$ and $c_d$ by destination $d$; this is a way to anonymize the incoming optical data for the next hop or to decrypt it. The generation of the key $c_i$ is important and it is created in LFSR with polynomial $c_i^*$ from vector $\vv c^*$ and seed $Sh_i$ from $\vv {Sh}$, both randomly selected in the control plane and distributed to each node $a_i$ during the circuit setup. The incoming optical signal $m\rq{}$ is then encrypted with optical XOR by all elements from $\vv c$ as $[m'+c_{1...d}]$, where $c_{1...d}=c_1+c_2+...+c_d$. The encrypted optical data $[m'+c_{1...d}]$ is finally sent to the predefined optical circuit (on Fiber 4).

\subsubsection{Anonymization} 
\par Each anonymization node performs data anonymization/decryption before forwarding. Here, the  incoming optical flows $m_i=[m'+c_{i...d}]$ from optical circuit on wavelength $\lambda_1$ (Fiber 1) is detected, and the information about  this wavelength is sent to Lambda Reader and Key Generation Unit for matching. As a result, the corresponding anonymization key $c_i$ is forwarded to optical XOR gate. The session key was also here generated by LFSR by utilizing generator polynomial $c_i^*$ and corresponding seed $Sh_i$, just like in the source node. The session key is simply converted to optical signal, before it is XOR-concatenated with data, as follows $[m'+c_{i...d}+c_i]=[m'+c_{i+1...d}]$. For simplicity, if the data is to be further sent towards the next hop, we assume that the same wavelength is utilized, respecting the wavelength continuity constraint. Otherwise, the signal can also be retransmitted (converted) to another wavelength, which would make it more complex. 

\subsubsection{Destination} 
When data reached its destination $d$, it is processed just like if destination were an ananymization node. The  received optical payload  $[m'+c_d]$ is decrypted with key $c_d$ into $[m']$, converted into the electronic signal at the destination.  

\section{Modeling and analysis}
\subsection{Routing and Treat Model in OOR}
\subsubsection {Routing model}
\par We assume that optical circuits in form of wavelength-continuous optical paths are setup in the random fashion over a randomly selected wavelength, whereby a network provides at most $\mathcal N$ optical paths between source $s$ and destination $d$ over all wavelengths and fibers, which for the sake of modeling we collect in set $\Psi$. Generally, only $N$ out of $\mathcal N$, $\mathcal N\geq N$, paths are available, while at least one wavelength paths among them is randomly selected for transmission. All $\mathcal N$ existing optical paths are arranged in the sorted vector $\vv{P}=\begin{pmatrix}
 P_0 \
P_1 \
 ... \
P_{\mathcal N-2} \
 P_{\mathcal N-1} \
\end{pmatrix}$
with related probabilities, that an individual path $\mathcal P_l$ is available.  We denote a fiber link as $e_{v'v''}$ and a wavelength link on $\lambda_x$ connecting two nodes, $v'$ and $v''$, as a wavelength link as $e_{v'v''}(\lambda_x)\in e_{v'v''}$, respectively. The capacity of a fiber link is measured in number of wavelengths. Thus, each edge $e_{v'v''}$ provides $c_{e_{v'v''}}$ parallel wavelength links between nodes $v'$ and $v''$. Each path $\mathcal P_l $ between $s$ and $d$ consists of $\theta_{l}+1$ links $e_{lk}\in \mathcal P_l$, $1\leq k\leq \theta_{l}+1$, and of $\theta_{l}$ intermediate nodes $v_{lq}\in\mathcal P_l$, $1\leq q\leq \theta_{l}$, while $\eta$ out of $ \theta_{l}$ nodes are randomly selected as anonymization nodes $a_{i}$, $1\leq i\leq\eta$. 
\par Let us now assume that there is a collection $\mathcal A$, which contains $\mathfrak a= C(|\Phi|,\gamma):=\binom{|\Phi|}{\gamma}$ path sets $\mathcal A_{\alpha}$, $1\leq\alpha\leq\mathfrak a$, while $\Phi$ can be a collection of all $\mathcal N$ existing paths, i.e., $\Psi$, or of $N$ available paths, i.e., $|\Phi|=\mathcal N$ or $|\Phi|=N$, and $\gamma$ can be a number of available paths $N$ or the number of required for transmission paths, i.e., $\gamma:=N$ or $\gamma:=1$. In contrast, set  $B_{\alpha}=\Phi\backslash A_{\alpha}=\{\mathcal P_{l} |\mathcal P_{l} \notin A_{\alpha} \}$ from collection $\mathcal B$ is the $\alpha^{th}$ set of remaining $|\Phi|-\gamma$ elements, which are not in the $\alpha^{th}$ combination $A_{\alpha}$. Thus, the probability $P''(\alpha,\gamma, \Phi)$, that $\gamma $ paths are in set $A_{\alpha}$ and not in set $B_{\alpha}$, is defined as
\begin{equation}\label{PrPathComb}
P''(\alpha, \gamma, \Phi)=\prod_{i=1,\atop \mathcal P_{l_i}\in A_{\alpha}}^{\gamma}P_{l_i}(\alpha)\prod_{t=1,\atop \mathcal P_{l_t}\in B_{\alpha}}^{\mathcal |\Phi|-\gamma}(1-P_{l_t}(\alpha))
\end{equation}
, where $P_{l_i}(\alpha)$ and $P_{l_t}(\alpha)$ are probabilities of path $\mathcal P_{l}$, $l=1,2,...,\mathcal N$, collected in $A_{\alpha}$ and $B_{\alpha}$ and indexes $i$ and $t$ are the sequence numbers of paths in $A_{\alpha}$ and $B_{\alpha}$, respectively.
\par As a result, the network provides $\Omega=j$ wavelength paths out of $\mathcal N$ paths with probability $\hat P(\Omega=j,\Psi)$ defined as 
\begin{equation}\label{PrNPath}
\hat P(\Omega=j,\Psi)=\sum_{\alpha=1}^{\binom{\mathcal N}{j}}P''(\alpha, j, \Psi)
\end{equation}
, where the $\alpha^{th}$ set from the collection $\mathcal A$ contains one path combination out of $\binom{\mathcal N}{j}$ combinations of $\Omega=j$, $0\leq j\leq\mathcal N$, available paths with related probabilities from vector $\vv P$.   

\par In case of $N<1$ (no path is available), the transmission request will be blocked with probability $P_B$, i.e.,
\begin{equation}\label{RequestBlocking}
P_B=\hat P(\Omega=0, \Psi)
\end{equation}
\par Since we assume that all $N$ paths have the same probability $1/N$ to be selected for transmission, the probability, that any path $\mathcal P_l$ collected in $\mathcal A_{\alpha}$ is available and utilized, is 
\begin{equation}\label{PrComb}
P(\alpha)=\tfrac{P''(\alpha, 1, \Psi)(1-P_B)}{\hat P(\Omega=1, \Psi)}
\end{equation}

\subsubsection{Threat model}
The treat model assumes that an attacker can eavesdrop select links in the network, and guess the source and destination nodes, as well as the data transmitted. To model this, let us define a set  $\mathfrak W$, containing all possible wiretap edges, while at most $\mathbf w = | \mathfrak W| $ edges can be attacked simultaneously. Since optical receivers are broadband, we assume that an attacker is always able to access all $c_{e_{v'v''}}$ wavelengths on a fiber link $e_{v'v''}$. 
\par Let us assume the worst type of attack in the network, where any link in the network can be eavesdropped with a probability $\phi$. In other words, the set of  fiber links attacked $\mathfrak W$ and its size $\mathbf w$ are variable, while each link can belong to set $\mathfrak W$ with probability $\phi$. Here, each wavelength path can be wiretapped with probability $P^w(\mathcal P_l)$ defined by Eq.~\eqref{PrWPath} as a probability that at least one wiretap link utilized by path $\mathcal P_l$.
\begin{equation}\label{PrWPath}
P^w(\mathcal P_l)=\phi\sum_{i=0}^{\theta_l}(1-\phi)^i=1-(1-\phi)^{\theta_l+1}
\end{equation}
, where $\phi$ has the same value for all links in the network. As a result, the probability, that a wiretap path is utilized for transmission, is defined by using of Eqs.~\eqref{PrComb} and ~\eqref{PrWPath} as 
\begin{equation}\label{NumWPathsA}
P^{\phi}_w=\sum_{\alpha=1}^{\mathcal N}P^w(\mathcal P_l)P(\alpha), \forall \mathcal P_{l}\in\mathcal A_{\alpha}
\end{equation}

\subsection{Analysis of data anonymization}
The secret data $m'$ of length $L_{m'}$ bits is sent over OOR network passing through $\eta$, $0\leq\eta\leq\eta_{max}$ anonymization nodes, whereby $\eta_{max}$ is the maximal number of anonymization nodes can be utilized along optical tunnel. When an attacker gains access to encrypted optical data $m$ with probability $P^{\phi}_w$ as discussed previously, it has to decrypt $m$ along all its anonymization keys to reveal the secret data $m\rq{}$.
\begin{lem}\label{secrM}
The OOR system is perfectly secure, whereby an attacker is not able to recover the secret data $m\rq{}$ sent over randomly selected wavelength path.
\end{lem}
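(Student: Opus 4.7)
The plan is to invoke Shannon's characterization of perfect secrecy, namely that the OOR system is perfectly secure iff $I(m';m_{\text{obs}}) = 0$, equivalently $H(m'\,|\,m_{\text{obs}}) = H(m')$, for any ciphertext $m_{\text{obs}}$ the attacker can observe on a wiretapped link. First I would identify exactly what $m_{\text{obs}}$ can be: under the threat model, with probability $P^{\phi}_w$ from Eq.~\eqref{NumWPathsA}, the attacker taps a link carrying some intermediate form $[m' + c_{i\ldots d}]$, where $c_{i\ldots d} = c_i + c_{i+1} + \ldots + c_d$ is the XOR of the session keys that have not yet been peeled off at that point in the tunnel. Crucially, the raw $m'$ never appears on any fiber, since even destination $d$ applies $c_d$ in the photonic domain before electronic conversion.

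Next I would argue that the aggregate key $c_{i\ldots d}$ is, from the attacker's point of view, uniformly distributed on $\{0,1\}^{L_{m'}}$. Each $c_j$ is produced by an LFSR with a primitive irreducible polynomial $c_j^*\in \vv c^*$ of randomly selected degree and a seed $Sh_j\in\vv{Sh}$, both drawn independently by the control plane and delivered to node $a_j$ only through the public-key-encrypted circuit-setup messages. Since these parameters are never exposed on the data plane, the attacker's posterior over $(c_j^*, Sh_j)$ remains uniform over its $|\vv c^*|\cdot|\vv{Sh}|$ support, and the induced distribution on $c_j$ is uniform on its support; because keys are freshly regenerated per circuit, no key is reused, satisfying the one-time-pad reuse condition. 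Independence of $c_{i\ldots d}$ from $m'$ follows because $m'$ enters the system only at the source and plays no role in the random selections made by the control plane.

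With these two ingredients I would conclude via the classical one-time pad argument: if $K$ is uniform on $\{0,1\}^{L_{m'}}$ and independent of $M'$, then $C = M' \oplus K$ is uniform and independent of $M'$, so $H(M'\,|\,C) = H(M')$. I would then extend this from a single tapped link to multiple tapped links on the same path: any two observations differ by $c_{i\ldots j}$ for some contiguous block of keys, so pairwise XORs never expose $m'$, and conditional on any subset of observations there remains at least one still-unknown $c_{j'}$ that keeps the residue uniform. Finally, multiplying by the routing/wiretap probabilities from Eqs.~\eqref{PrComb}--\eqref{NumWPathsA} leaves the mutual information unchanged at zero, because a mixture of zero-information channels is still zero-information.

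The main obstacle, and the point where the argument is most delicate, is justifying the step from ``LFSR output with random polynomial and seed'' to ``uniformly distributed one-time pad.'' A length-$n$ LFSR is deterministic and becomes predictable from $2n$ consecutive known output bits, so the proof cannot invoke the one-time pad naively. I would handle this by exploiting that the attacker never sees any $c_{i\ldots d}$ in isolation, only XORed with the unknown plaintext $m'$, and by bounding the effective key entropy through the joint cardinality of $\vv c^*$ (including variable degree) and $\vv{Sh}$; the argument must then assume that the key stream length $L_{m'}$ does not exceed the period of the chosen primitive polynomial, so that no short-cycle structure leaks. This is the place where the ``perfectly secure'' claim is most sensitive to the assumed randomness model for the control-plane parameter selection.
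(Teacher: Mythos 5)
Your route is genuinely different from the paper's, and it contains a gap that you yourself flag but do not close. The paper's proof of Lemma~\ref{secrM} is a combinatorial equivocation count: it treats the $\eta+1$ session keys as arbitrary distinct nonzero strings in $\{0,1\}^{L_{m'}}$, counts the $(\eta+1)!\binom{2^{L_{m'}}-2}{\eta+1}\cdot 2^{L_{m'}}$ admissible plaintext/key-tuple configurations to get $H_e(m')\geq H(m')$ in Eq.~\eqref{Hm}, and then sums over the attacker's additional uncertainty about how many of the $\eta_{max}+1$ encryption layers remain, yielding the equivocation in Eq.~\eqref{HmA}. Crucially, that lemma says nothing about how the keys are generated; the question of whether LFSR-produced keys carry enough entropy is deferred entirely to Lemma~\ref{unlink}, whose hypothesis Eq.~\eqref{hlen} is exactly the condition $H_1(c_i)\geq H_2(c_i)=L_m$. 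Your proof instead runs the standard Shannon/one-time-pad argument: identify the observable as $m'+c_{i\ldots d}$, argue the pad is uniform and independent of $m'$, and conclude $I(m';m_{\mathrm{obs}})=0$. That formulation is cleaner and more standard than the paper's, and your point about multiple taps on the same circuit --- consecutive observations XOR to expose individual $c_j$, but $c_d$ is never exposed alone, so one unknown layer always survives --- is a genuine issue the paper does not address at all.

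The gap is the uniformity step. The one-time-pad conclusion $H(M'\mid C)=H(M')$ requires the pad to be uniform over all of $\{0,1\}^{L_{m'}}$, not merely uniform on its support. A key $c_j$ determined by a primitive polynomial of degree $g$ and a nonzero seed ranges over at most $\sum_{g=g_{min}}^{g_{max}} C_g(2^g-1)$ values, which is in general far smaller than $2^{L_{m'}}$; by Shannon's necessary condition for perfect secrecy, $H(K)\geq H(M)$, the claim simply fails unless Eq.~\eqref{hlen} holds. You name this as ``the main obstacle'' and say the argument ``must assume'' the right randomness model, but an assumption named is not an assumption discharged: as written, your proof establishes perfect secrecy only conditionally on the hypothesis of Lemma~\ref{unlink}, which the statement of Lemma~\ref{secrM} does not include. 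To repair it you would either have to import Eq.~\eqref{hlen} explicitly as a hypothesis of this lemma, or do what the paper does --- abstract the keys to arbitrary distinct strings for this lemma and prove the key-entropy bound separately.
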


\begin{proof} A secret data $m'$ of length $L_{m'}$ bits is generally an arbitrary bit sequence out of all $2^{L_{m'}}$ possible, while the entropy of the plain text is $H(m')=L_{m'}$. $m'$ is encrypted by all $\eta+1$ secret keys of all anonymization node and of the destination.  Thus, there are $(\eta+1)! \cdot \binom{2^{L_{m'}}-2} {\eta+1}\cdot 2^{L_{m'}}$ possible combinations of $m'$ and $\eta+1$ secret keys, while each combination always contains $\eta+2$ different elements out of $2^{L_{m'}}$, whereby only $m\rq{}$ can contain zero element. Thus, the entropy of encrypted data $m'$ is defined as follow
\begin{equation}\label{Hm}
\scalebox{0.95}{\begin{minipage}{2\columnwidth}
$H_e(m')=log\left((\eta+1)! \binom{2^{L_{m'}}-2} {\eta+1}\cdot 2^{L_{m'}}\right)\overset{!}{\geq}L_{m'}=:H(m')$
\end{minipage}}
\end{equation}

\par An attacker does not have any knowledge about the number of selected anonymization nodes $\eta$ or the number of already passed anonymization nodes on the wavelength path and, thus, has to check all $\eta_{max}+1$ possible variants of the same, where $m'$ can be encrypted by one to $\eta_{max}+1$ secret keys. Thus, the equivocation $H(m'|m)$ observed by an attacker is
\begin{equation}\label{HmA}
\scalebox{0.9}{\begin{minipage}{2\columnwidth}
$H(m'|m)=\sum\limits^{\eta_{max}}_{i=0}log\left((\eta_{max}+1-i)! \cdot \binom{2^{L_{m'}}-2} {\eta_{max}+1-i}\cdot 2^{L_{m'}}\right)$
\end{minipage}}
\end{equation}
However, $H(m'|m)\geq H(m')$, thus any $m'$ can be transmitted perfectly secret.
\end{proof}

\par To provide data privacy and anonymity, the proposed OOR utilizes different functional components such as public key cryptography, encoding by XOR and key generation with LFSR on control and data plane. Next, we analyze information-theoretically the resulting privacy and anonymity degree as a function of components utilized. 
\subsubsection{Public Key Cryptography}
The public key cryptosystems require two keys, i.e., public $K^+$ and private $K^-$. The message $m$ sent to node $v_j$ is encrypted by public key $K^+_j$ of $v_j$ as $m_e=K^+_j(m)$. The destination $v_j$ can decrypt received message $m_e$ by applying the private key $K^-_j$ as $K^-_j(m_e)=K^-_j(K^+_j(m))=m$. For high level of data secrecy, we restrict the policies for selecting of key and plain text sizes as $H(m)\leq H(K^+)$, where $H(m)=L_m$ and $H(K^+)=L_K$ are entropies of secret message of length $L_m$ bits and public key of length $L_K$ bits, respectively, i.e., $L_m\leq L_K$. That ensures that an eavesdropper is not able to break the utilized cryptosystem by obtaining the encrypted data $m_e$.

\subsubsection{XOR operation}
We assume that incoming date $m_{v_i}$ of length $L_{m_{v_i}}$ in node $v_i$ is mixed, i.e., XOR concatenated, with a secret key $c_i$ of length $L_{c_i}$ so that an attacker can not recognize $m_{v_i}$ and its next hop node $v_j$. The outgoing data $m_{v_j}$ is defined as $m_{v_j}=m_{v_i}+c_i=\{\forall m_{{v_i}_p}\in m_{v_i} \land \forall c_{{i}_p}\in c_i| (\neg {m_{{v_i}_p}}\land c_{{i}_p})\lor( m_{{v_i}_p}\land \neg c_{{i}_p})\}$, where $m_{{v_i}_p}$ and $c_{{i}_p}$ are the $p^{th}$ bits, $1\leq p\leq L_{m_{v_i}}$, and, $1\leq p\leq L_{c_{i}}$, within $m_{v_i}$ and $c_i$, respectively. Without loss of generality, any secret data $m'$ is XOR encrypted into data $m$ of the same length $L_{m}=L_{m'}$. 

\subsubsection{LFSR}
Generally, keys generated with LFSR do not provide a strong cryptographic security, whereby an attacker is able to gain the generator polynomial of degree $g$, if it receives at least $2g$ consecutive plain text bits generated by LFSR. To this end, we propose to generate session key $c_i$ for data anonymization directly in each anonymization node $a_i$, whereby a primitive irreducible polynomial $c^*_i$ of degree $g$ and seed $Sh_i$ as a start point are randomly selected by source for each utilized anonymization node $a_i$ and secretly distributed with public key cryptography. The source randomly selects one out of $C_g=\varphi(2^g-1)/g$ primitive polynomials of arbitrary degree $g$, $g_{min}\leq g\leq g_{max}$, where $\varphi(x)=x(1-1/p_1)(1-1/p_2)...(1-1/p_k)$ is Euler function, while $p_1...p_k$ are the prim numbers. The minimal degree $g_{min}$ is defined so that the maximal key length generated by LFSR is larger than data $m\rq{}$ encrypted by this key, i.e., $L_{m\rq{}}<2^{g_{min}}-1$.

\par Due to public key cryptography used to distribute control messages during setup process of the optical circuit, all the data is assumed to be perfectly secret. In other words, each control message $m_c$ of length $L_c$ encrypted with a public key $K^+_i$ of node $v_i$ can not be recovered by an attacker, unless the control message $m_c$ out of all $2^{L_c}$ possible messages is guessed, i.e., $H(m_c)=L_c=H(m_c|K^+_i(m_c))\leq H(K^-_i|K^+_i(m_c))=H(K^-_i)$. As a result,  the routing information for circuit provisioining, the bit sequences $\vv c^*$, i.e., randomly selected primitive polynomials for session key generation, and random selected  seeds $\vv Sh$ are perfectly secret, which can not be recovered by an external attacker.

\par  Since the data from source to destination  is anonymized in each anonymization node along optical tunnel, an attacker can only discover $s$ and $d$ by accessing all $\eta+1$ wavelength segments between the anonymization nodes, as well as all incoming links of $s$ and outgoing links of $d$ to ensure that they are not the forwarding nodes of optical data attacked. 


\begin{lem}\label{unlink}
The proposed OOR ensures privacy and secrecy between any $s-d$ pair from an arbitrary extern attacker, if 
\begin{equation}\label{hlen}
\sum_{g=g_{min}}^{g_{max}}log\left( \tfrac{\varphi(2^{g}-1)(2^{g}-1)}{g}\right)\geq L_{m}\leq 2^{g_{min}}-1
\end{equation}
\end{lem}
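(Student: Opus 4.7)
The plan is to show that, under the stated inequality, the entropy of the session keys available to any external attacker is at least as large as the entropy of the ciphertext $m$, so that Shannon's condition $H(K)\geq H(m)$ for perfect secrecy is satisfied. This builds directly on Lemma~\ref{secrM}, which already shows that the layered XOR encryption with $\eta+1$ keys is perfectly secret whenever the attacker's equivocation meets $L_{m'}$. The task therefore reduces to verifying that the LFSR-based key generation produces enough uncertainty about each key $c_i$ from the attacker's point of view.

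First, I would invoke the assumption, justified in the paragraph preceding the lemma, that the public-key cryptography protecting the control channel yields perfect secrecy of the routing information, the polynomial vector $\vec c^*$, and the seed vector $\vec{Sh}$. Because $H(m_c)\leq H(K^-_i)$ by the sizing policy adopted for the public key cryptosystem, the external attacker obtains no information about which degree $g$, which primitive irreducible polynomial of that degree, or which seed is actually used at any anonymization node, and must a priori treat every valid $(g,c^*_i,Sh_i)$ triple as equally likely.

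Second, I would count the size of this key search space explicitly. For each admissible degree $g\in[g_{min},g_{max}]$, the number of primitive irreducible polynomials is $\varphi(2^g-1)/g$, and for each such polynomial the non-zero seed can take any of $2^g-1$ values, giving $\varphi(2^g-1)(2^g-1)/g$ distinct maximal-length LFSR sequences. Summing over admissible degrees and taking the logarithm reproduces exactly the left-hand side of Eq.~\eqref{hlen}. The second hypothesis $L_m\leq 2^{g_{min}}-1$ then guarantees that even the LFSR with the shortest period produces a key at least as long as the plaintext, so no bit of $c_i$ is reused when XORed with $m'$, which is precisely the one-time-pad structure that Lemma~\ref{secrM} relies on.

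Combining these steps, the per-node key entropy is bounded below by the LHS of Eq.~\eqref{hlen}, which by hypothesis is $\geq L_m=H(m)$, whence the attacker's equivocation about $m'$ given the observable ciphertext is at least $H(m')$, and privacy and secrecy of the $s$-$d$ pair follow. The subtle point I expect to be the real obstacle is defending the claim that the classical LFSR attack, which recovers $c^*_i$ from $2g$ consecutive \emph{key} bits, is genuinely blocked: the attacker only sees $m'+c_i$ on the wire and never the plaintext, so the needed key stream is not directly observable, and the random choice of degree $g$ over a wide interval prevents any shortcut via guessing $g$. Once that argument is made precise, the counting and the entropy comparison are essentially mechanical.
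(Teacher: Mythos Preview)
Your proposal is correct and follows essentially the same route as the paper: use perfect secrecy of the control channel to argue that the pair $(c^*_i,Sh_i)$ is hidden, count the $(g,c^*_i,Sh_i)$ triples as $\varphi(2^g-1)(2^g-1)/g$ per degree, compare the resulting key uncertainty to $L_m$, and use $L_m\le 2^{g_{min}}-1$ to guarantee the one-time-pad length condition. The only cosmetic difference is that the paper phrases the comparison as the attacker choosing the easier of two strategies---guessing $(c^*_i,Sh_i)$ versus guessing the $L_m$-bit key $c_i$ directly---and then requiring $H_1(c_i)\ge H_2(c_i)=L_m$, which is exactly your Shannon condition $H(K)\ge H(m)$; also be careful that Eq.~\eqref{hlen} places the logarithm \emph{inside} the sum over $g$, so your phrase ``summing over admissible degrees and taking the logarithm'' should be read as summing the per-degree log-counts rather than taking the log of the total count.
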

\begin{proof}
 Let us assume that attacker has access to all links along the path. In this case, the attacker needs to deanonymize the optical data sent by each node $a_{i}$ to the next anonymization node $a_{i+1}$. Due to the fact that the polynomial $c^*_i$ of length $g+1$ bits, $g_{min}\leq g\leq g_{max}$, and seed $Sh_i$ are chosen randomly and transmitted perfectly secure, the entropy of anonymization key can be defined as $H_1(c_i)=\sum_{g=g_{min}}^{g_{max}}log(C_g(2^{g}-1))$, while source randomly selects one out of $C_g$ existing primitive irreducible polynomials of degree $g$ and a seed out of $2^g-1$ (without zero) possible for each anonymization node. On the other hand, the secret key $c_i$ can be an arbitrary bit sequence out of $2^{L_{m}}$ possible, i.e., $H_2(c_i)=L_{m}$ bits. An attacker can follow the algorithm for generation of $c_i$ and, thus, guesses any $c^*_i$ and $Sh_i$ or directly guesses $c_i$ of length $L_{m}$. In the first case, the equivocation is defined as $H(c_i|K^+_i(m_c))=H(c_i|m)=\sum_{g=g_{min}}^{g_{max}}log(C_g(2^{g}-1))=H_1(c_i)$, while, in the second case, $H(c_i|K^+_i(m_c))=H(c_i|m)=L_{m}=H_2(c_i)$. For an attacker, it is simpler to guess polynomial and seed, if $\sum_{g=g_{min}}^{g_{max}} C_g(2^{g}-1)<2^ {L_{m}}$. Thus, $H_1(c_i)$ must be equal to or larger than $H_2(c_i)$ for a perfect secrecy. Since there are $C_g=\varphi(2^g-1)/g$ primitive polynomials of degree $g$, the condition for perfect secrecy provided by anonymization key can be defined by Eq.~\eqref{hlen}, i.e., an attacker will be not able to deanonymize and to link (trace back) to nodes $s$ and $d$.
\end{proof}

\section{Performance evaluation}
\par We now show theoretical results for proposed private and anonymous OOR network and validate the same by simulations. The analytical results were calculated with Eq.~\eqref{NumWPathsA} as well as with Eqs.~\eqref{Hm} and \eqref{HmA}. Since our model directly depends a steady state wavelength path availability and random path selection, we validate the analysis by using dynamic Monte-Carlo-simulations with $95\%$ of confidence. 

\par We analyze modified optical network topology with 24 nodes and 35 fiber links, each fiber link carrying $10$ wavelengths $\lambda$; each wavelength has the capacity of 10Gb/s. The link directions and available number of wavelengths on each fiber link are defined as $\{1-2,1-3,2-6,2-3,3-4,3-7,4-5,4-10,10-11,11-5,6-16, 6-7,16-17,7-17,7-8,17-18,18-8,18-22,8-9,9-4,9-12,9-19,22-19,22-23,19-20,23-24, 23-20,20-12,20-21,24-21,12-10,12-13,21-15,15-13,13-11\}$ and $\{6,6,3,3,4,5,4,3,4,8,2,1,2,3,3,5,2,3,5,3,1,1,1,2,2,1,1,\\2,1,1,1,2,2,2,4\}$, respectively. Let us consider source node 1 and destination node 5. Here, there are in total $\mathcal N = 12$ different possible wavelength paths over all available wavelength links. All paths are sorted in the ascending order of length in number of hops, and collected in $\vv P$. The path availability for each $\lambda$ decreases with increasing path length, i.e., $\vv P = \{0.9,0.85,0.8,0.75,0.75,0.7,0.65,0.6,0.55,0.55,0.5,0.5\}$. Before transmission,  random wavelength paths between anonymization nodes are established by utilizing available wavelengths. Every node in the network can be used as an anonymization node, and the number of anonymization nodes per path is determined randomly.

\par Fig.~\ref{AnonNodes} shows the normalized equivocation $H(m'|m)$ as a function of amount of number anonymization nodes   $\eta$ used on a path and of maximal number of anonymization nodes $\eta_{max}$. An increase in $\eta_{max}$ increases the system robustness against wiretapping (dashed line), while an attacker have to recover more redundant information, when $\eta<\eta_{max}$ anonymization nodes are utilized. For instance, an attacker must recover $35 H_e(m')$ bits to guess secret data $m'$ in case of $\eta=0$ and $\eta_{max}=9$, while increase in $\eta$ increases entropy $H_e(m')$ as per Eq.~\eqref{Hm} and decreases redundant information in encrypted data $m$ up to $6 H_e(m')$ for $\eta=\eta_{max}=9$. 
\begin{figure} [t]
  \centerline{\includegraphics[width=0.9\columnwidth]{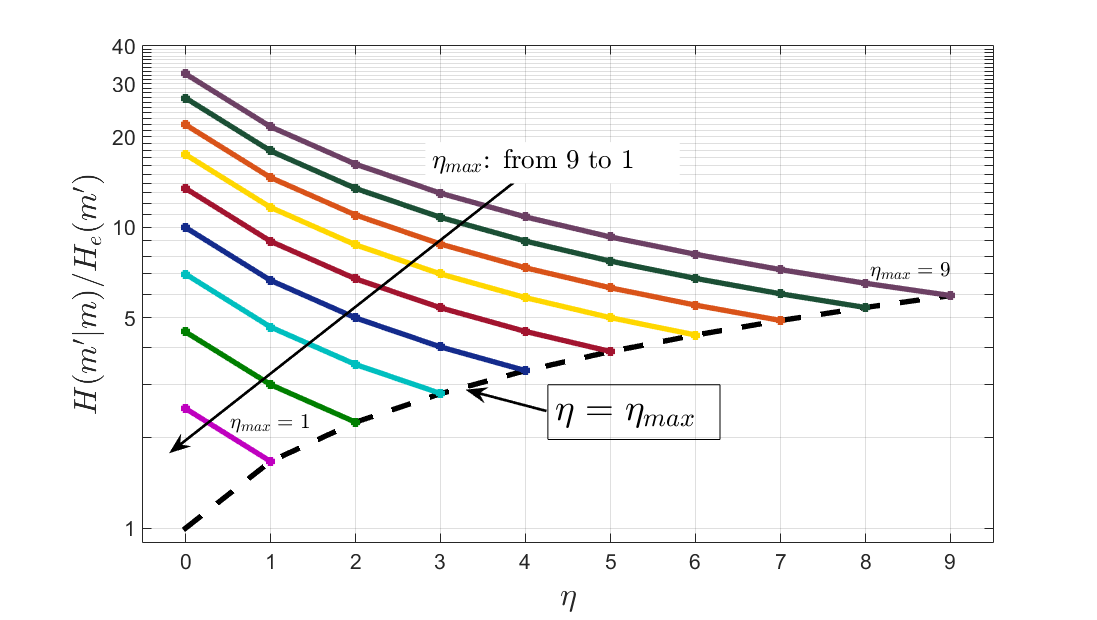}}\vspace{-0.4cm}
  \caption{\small  Normalized equivocation vs. amount of anonymization nodes.}\vspace{-0.4cm}
  \label{AnonNodes}
  \end{figure}
\begin{figure} [t]
  \centerline{\includegraphics[width=0.9\columnwidth]{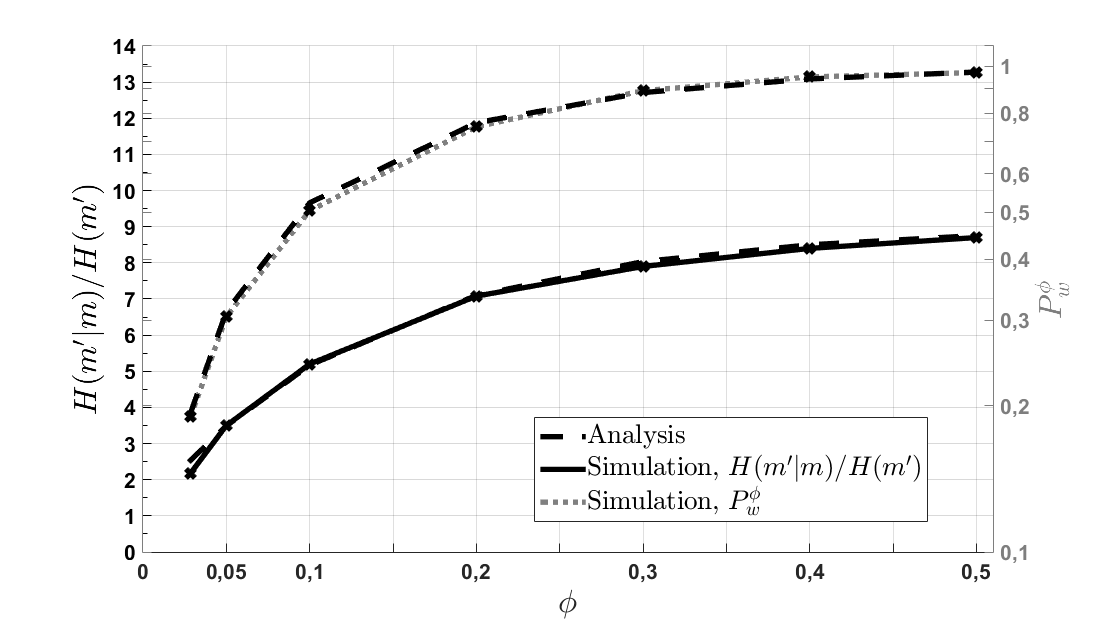}}\vspace{-0.2cm}
  \caption{\small $H(m\rq{}|m)$ and $P^{\phi}_w$ vs. probability for attacked link, $\phi$.}
  \label{PLinksWData}\vspace{-0.5cm}
  \end{figure}
   \begin{figure} [t]
  \centerline{\includegraphics[width=0.9\columnwidth]{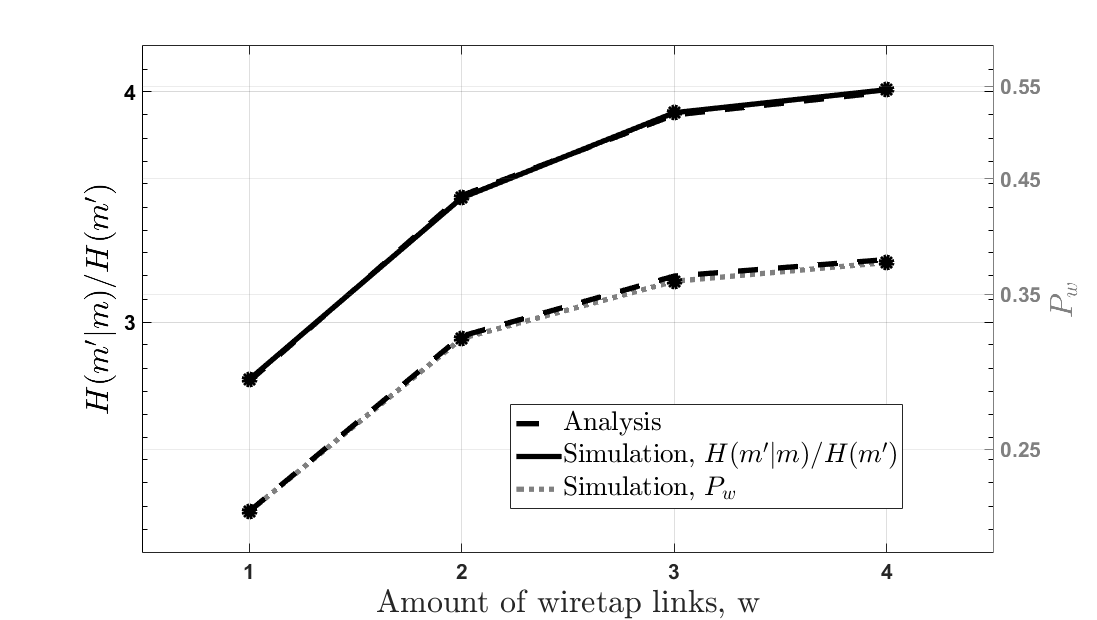}}\vspace{-0.2cm}
  \caption{\small $H(m\rq{}|m)$ and $P_w$ vs. number of wiretap fiber links, i.e., $\mathbf w$.}
  \label{WLinksWData}\vspace{-0.6cm}
  \end{figure}
  \par Next, we assume $\eta_{max}=2$  and evaluate the equivocation $H(m'|m)$ and the probabilities $P_w$ and $P^{\phi}_w$ for successful eavesdropping and correctly recovered data $m\rq{}$. Fig.~\ref{PLinksWData} shows the normalized mean equivocation $H(m'|m)$ and probability for wiretapped transmission path $P^{\phi}_w$, when any link in the network can be eavesdropped with probability $\phi$. The equivocation redundancy and probability for eavesdropped transmission path $P^{\phi}_w$ increase with $\phi$. As a result, an attacker can wiretap almost all paths when probability for wiretap link, $\phi$, is $50\%$, while equivocation redundancy amounts $~9H(m')$, when an attacker tries to decrypt. Next, we consider a special case whereby a maximum of $4$  fiber links in network can be wiretapped either simultaneously or indiviudally, $e_i\in\{ 3-7, 8-9, 17-18, 13-11\}$. Fig.~\ref{WLinksWData} shows the normalized mean equivocation $H(m'|m)$ and probability for wiretapped transmission path $P_w$ as a function of number of fiber links wiretapped at the same time, i.e., $\mathbf w$. An increase in  $\mathbf w$ increases the probability $P_w$ and, thus, the amount of redundant information required to be recovered by attacker, which follows the algorithm to guess $m'$ from eavesdropped optical data $m$. Here, the equivocation increases from around $2.7 H(m')$ to $4 H(m')$ bits with increasing number of wiretap links, i.e., for $\mathbf w=1$ and $\mathbf w=4$, respectively, while the mean amount of wiretapped data, i.e., $P_w$, also increases. 

\section{Conclusion}
We proposed an Optical Onion Routing (OOR) architecture, the mirror of Tor. We designed the network and a new optical anonymization node architecture, including the optical components (XOR) and their electronic counterparts (LFSR) to realize layered encryption. We proved formally and confirmed numerically that such an optical onion network can be perfectly private and secure. The paper aimed at providing practical foundations for privacy-enhancing optical network technologies, and as such is work in progress.

\bibliographystyle{IEEEtran}
\bibliography{codingbibTran}
\end{document}